\newcommand{\ZZ}{\mathbb{Z}}
\newcommand{\FF}{\mathbb{F}}
\newtheorem{theorem}{Theorem}[section]
\newtheorem{proposition}[theorem]{Proposition}
\newtheorem{corollary}[theorem]{Corollary}
\theoremstyle{definition}
\newtheorem{remark}[theorem]{Remark}
\newcommand{\vct}[1]{\mathbf{#1}}
\newcommand{\nth}{^{\text{th}}}
\DeclareMathOperator{\moddec}{mod}
\renewcommand{\mod}[1]{\,(\moddec #1)}
\newcommand{\clabel}[2]{\protected@write \@auxout {}{\string \newlabel {#1}{{#2}{\thepage}{#2}{#1}{}} }\hypertarget{#1}{}}
\newcommand{\lb}{\allowbreak}
\newcommand{\breaklist}[2][,\lb]{\def\nextitem{\def\nextitem{#1}}\renewcommand*{\do}[1]{\nextitem{##1}}\docsvlist{#2}}
\DeclareMathOperator{\cirdec}{circ}		
\newcommand{\cir}[1]{\cirdec(\breaklist{#1})}
\DeclareMathOperator{\rankdec}{rank}
\newcommand{\rank}[1]{\rankdec(\breaklist{#1})}
\newcommand{\floor}[1]{\lfloor#1\rfloor}
\DeclareMathOperator{\autdec}{Aut}
\newcommand{\aut}[1]{\autdec(#1)}
\newcommand{\vctg}[1]{\boldsymbol{#1}}
\renewcommand*\env@matrix[1][*\c@MaxMatrixCols c]{\hskip -\arraycolsep\let\@ifnextchar\new@ifnextchar\array{#1}}
\newcommand{\pcir}[2]{\cirdec_{#1}(\breaklist{#2})}
\providecommand{\keywords}[1]{\small\textit{Keywords}: #1}
\providecommand{\msc}[1]{\small\textit{2020 MSC}: #1}
\title{Binary self-dual codes of various lengths with new weight enumerators from a modified bordered construction and neighbours}
\author{J. Gildea, A. Korban and A. M. Roberts\\
Department of Mathematical and Physical Sciences\\
University of Chester\\
Exton Park\\
Chester CH1 4AR\\
United Kingdom\\
{}\\
A. Tylyshchak\\
Department of Algebra\\
Uzhgorod National University\\
Uzhgorod\\
Ukraine
}
\date{}
\begin{document}

\maketitle

\keywords{Binary self-dual codes, Bordered constructions, Gray maps, Extremal codes, Best known codes}

\msc{94B05, 15B10, 15B33}

\let\thefootnote\relax\footnote{E-mail addresses: \href{mailto:j.gildea@chester.ac.uk}{j.gildea@chester.ac.uk} (J. Gildea), \href{mailto:adrian3@windowslive.com}{adrian3@windowslive.com} (A. Korban), \href{mailto:adammichaelroberts@outlook.com}{adammichaelroberts@outlook.com} (A. M. Roberts),
\href{mailto:}{alxtlk@bigmir.net} (A. Tylyshchak)
}

\begin{abstract}
In this work, we define a modification of a bordered construction for self-dual codes which utilises $\lambda$-circulant matrices. We provide the necessary conditions for the construction to produce self-dual codes over finite commutative Frobenius rings of characteristic 2. Using the modified construction together with the neighbour construction, we construct many binary self-dual codes of lengths 54, 68, 82 and 94 with weight enumerators that have previously not been known to exist.
\end{abstract}

\section{Introduction}

The class of self-dual codes is widely researched in coding theory. Not only are they rich in mathematical theory, but they also have close relationships to other mathematical structures such as lattices, designs and modular forms. Much effort has been invested into developing techniques for constructing self-dual codes and particularly extremal binary self-dual codes, i.e. binary self-dual codes whose minimum distance meets a specific bound. 

Two of the most famous techniques include the double circulant and bordered double circulant constructions. All extremal double circulant and bordered double circulant binary self-dual codes have been classified up to length 96 \cite{R-056,R-057,R-058,R-059}. Another well-known technique is the four circulant construction, which was introduced in \cite{R-014} and has since been used to great effect in producing binary self-dual codes \cite{R-136,R-184,R-123,R-176,R-097,AMR1}. A substantial amount of work has been done on constructing self-dual codes having an automorphism of odd prime order \cite{R-193,R-072,R-071,R-085,R-046,R-125}. Recently, a strong connection between group rings and self-dual codes was established \cite{R-008} which has been utilised to develop a number of different techniques for constructing extremal binary self-dual codes \cite{R-043,R-113,R-096,R-067}.

Bordered matrix constructions have also proven to be effective techniques for constructing binary self-dual codes \cite{R-043,R-098,R-067,R-111,R-108,AMR4}. In this work, we present a new bordered matrix construction derived as a modification of a construction recently given in \cite{AMR4}. By applying this new construction, we obtain many extremal, optimal and best known binary self-dual codes that have previously not been known to exist. In particular, together with the neighbour method, we construct binary self-dual codes of lengths 54, 68, 82 and 94 with weight enumerator parameters of previously unknown values. We also provide the conditions needed by the construction to produce self-dual codes over a finite commutative Frobenius ring of characteristic 2.

The paper is organised as follows. In Section \ref{section-2}, we give preliminary definitions and results on self-dual codes, the alphabets we use and special matrices which we use in this work. In Section \ref{section-3}, we present the new construction and prove under what conditions it produces self-dual codes over finite commutative Frobenius rings of characteristic 2. In Section \ref{section-4}, we apply the new construction and the neighbour construction to obtain the new self-dual codes which we also tabulate. We finish with some concluding remarks and discussion of possible directions for future work.

\section{Preliminaries}\label{section-2}

\subsection{Self-Dual Codes}

Let $R$ be a commutative Frobenius ring (see \cite{B-014} for a full description of Frobenius rings and codes over Frobenius rings). Throughout this work, we always assume $R$ has unity. A code $\mathcal{C}$ of length $n$ over $R$ is a subset of $R^n$ whose elements are called codewords. If $\mathcal{C}$ is a submodule of $R^n$, then we say that $\mathcal{C}$ is linear. Let $\mathbf{x},\mathbf{y}\in R^n$ where $\mathbf{x}=(x_1,x_2,\dots,x_n)$ and $\mathbf{y}=(y_1,y_2,\dots,y_n)$. The (Euclidean) dual $\mathcal{C}^{\bot}$ of $\mathcal{C}$ is given by
	\begin{equation*}
	\mathcal{C}^{\bot}=\{\mathbf{x}\in R^n: \langle\mathbf{x},\mathbf{y}\rangle=0,\forall\mathbf{y}\in\mathcal{C}\},
	\end{equation*}	
where $\langle\cdot,\cdot\rangle$ denotes the Euclidean inner product defined by
	\begin{equation*}
	\langle\mathbf{x},\mathbf{y}\rangle=\sum_{i=1}^nx_iy_i.
	\end{equation*}

We say that $\mathcal{C}$ is self-orthogonal if $\mathcal{C}\subseteq \mathcal{C}^\perp$ and self-dual if $\mathcal{C}=\mathcal{C}^{\bot}$.

An upper bound on the minimum (Hamming) distance of a doubly-even (Type II) binary self-dual code was given in \cite{R-116} and likewise for a singly-even (Type I) binary self-dual code in \cite{R-115}. Let $d_{\text{I}}(n)$ and $d_{\text{II}}(n)$ be the minimum distance of a Type I and Type II binary self-dual code of length $n$, respectively. Then
	\begin{equation*}
	d_{\text{II}}(n)\leq 4\floor{n/24}+4
	\end{equation*}
and
	\begin{equation*}
	d_{\text{I}}(n)\leq
	\begin{cases}
	4\floor{n/24}+2,& \text{if }n\equiv 0\pmod{24},\\
	4\floor{n/24}+4,& \text{if }n\not\equiv 22\pmod{24},\\
	4\floor{n/24}+6,& \text{if }n\equiv 22\pmod{24}.
	\end{cases}
	\end{equation*}

A self-dual code whose minimum distance meets its corresponding bound is called \textit{extremal}. A self-dual code with the highest possible minimum distance for its length is said to be \textit{optimal}. Extremal codes are necessarily optimal but optimal codes are not necessarily extremal. A \textit{best known} self-dual code is a self-dual code with the highest known minimum distance for its length.

\subsection{Alphabets}

In this paper, we consider the alphabets $\FF_2$ and $\FF_2+u\FF_2$.

Define
	\begin{equation*}
	\FF_2+u\FF_2=\{a+bu:a,b\in\FF_2,u^2=0\}.
	\end{equation*}

Then $\FF_2+u\FF_2$ is a commutative ring of order 4 and characteristic 2 such that $\FF_2+u\FF_2\cong\FF_2[u]/\langle u^2\rangle$. 

We recall the following Gray map from \cite{R-117}
	\begin{align*}
	\varphi_{\FF_2+u\FF_2}&:(\FF_2+u\FF_2)^n\to\FF_2^{2n}\\
        &\quad a+bu\mapsto(b,a+b),\,a,b\in\FF_2^{n}.
	\end{align*}

Note that this Gray map preserves orthogonality. The Lee weight of a codeword is defined to be the Hamming weight of its binary image under the aforementioned Gray map. A self-dual code in $R^n$ where $R$ is equipped with a Gray map to the binary Hamming space is said to be of Type II if the Lee weights of all codewords are multiples of 4, otherwise it is said to be of Type I.
	\begin{proposition}\textup{(\cite{R-117})}\label{proposition-1}
		Let $\mathcal{C}$ be a code over $\FF_2+u\FF_2$. If $\mathcal{C}$ is self-orthogonal, then $\varphi_{\FF_2+u\FF_2}(\mathcal{C})$ is self-orthogonal. The code $\mathcal{C}$ is a Type I (resp. Type II) code over $\FF_2+u\FF_2$ if and only if $\varphi_{\FF_2+u\FF_2}(\mathcal{C})$ is a Type I (resp. Type II) code over $\FF_2$. The minimum Lee weight of $\mathcal{C}$ is equal to the minimum Hamming weight of $\varphi_{\FF_2+u\FF_2}(\mathcal{C})$.
	\end{proposition}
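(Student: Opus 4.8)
The plan is to fix the shorthand $\varphi = \varphi_{\FF_2+u\FF_2}$, to write every codeword in the form $\mathbf{x} = \mathbf{a}+\mathbf{b}u$ with $\mathbf{a},\mathbf{b}\in\FF_2^n$, and to use three elementary properties of $\varphi$ that follow immediately from its defining formula: it is $\FF_2$-linear (since $\varphi((\mathbf{a}+\mathbf{c})+(\mathbf{b}+\mathbf{d})u) = (\mathbf{b}+\mathbf{d},\mathbf{a}+\mathbf{c}+\mathbf{b}+\mathbf{d}) = \varphi(\mathbf{a}+\mathbf{b}u)+\varphi(\mathbf{c}+\mathbf{d}u)$), it is a bijection from $(\FF_2+u\FF_2)^n$ onto $\FF_2^{2n}$ (it is injective because $a+bu$ is recovered from the pair $(b,a+b)$, and both sets have $2^{2n}$ elements), and it maps $\mathbf{0}$ to $\mathbf{0}$. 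I would record these facts first and then dispatch the three assertions in turn.

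For the self-orthogonality claim I would take $\mathbf{x}=\mathbf{a}+\mathbf{b}u$ and $\mathbf{y}=\mathbf{c}+\mathbf{d}u$ in $\mathcal{C}$ and compute both inner products. Using $u^2=0$ one finds $\langle\mathbf{x},\mathbf{y}\rangle = \langle\mathbf{a},\mathbf{c}\rangle + (\langle\mathbf{a},\mathbf{d}\rangle+\langle\mathbf{b},\mathbf{c}\rangle)u$ in $\FF_2+u\FF_2$, while over $\FF_2$ the expansion $\langle\varphi(\mathbf{x}),\varphi(\mathbf{y})\rangle = \langle\mathbf{b},\mathbf{d}\rangle + \langle\mathbf{a}+\mathbf{b},\mathbf{c}+\mathbf{d}\rangle$ collapses, after cancelling the two copies of $\langle\mathbf{b},\mathbf{d}\rangle$, to $\langle\varphi(\mathbf{x}),\varphi(\mathbf{y})\rangle = \langle\mathbf{a},\mathbf{c}\rangle + \langle\mathbf{a},\mathbf{d}\rangle + \langle\mathbf{b},\mathbf{c}\rangle$. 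If $\mathcal{C}$ is self-orthogonal then $\langle\mathbf{x},\mathbf{y}\rangle=0$ forces both $\langle\mathbf{a},\mathbf{c}\rangle=0$ and $\langle\mathbf{a},\mathbf{d}\rangle+\langle\mathbf{b},\mathbf{c}\rangle=0$, hence $\langle\varphi(\mathbf{x}),\varphi(\mathbf{y})\rangle=0$; since this holds for every pair of codewords, $\varphi(\mathcal{C})$ is self-orthogonal.

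For the statement on minimum distance, the definition of the Lee weight gives $w_L(\mathbf{x}) = \mathrm{wt}(\varphi(\mathbf{x}))$ for every $\mathbf{x}$, where $\mathrm{wt}$ denotes Hamming weight; since $\varphi$ restricts to a bijection between the nonzero codewords of $\mathcal{C}$ and the nonzero codewords of $\varphi(\mathcal{C})$, minimising these two equal quantities over corresponding sets yields the claimed equality. For the Type I / Type II assertion (which presupposes self-duality, as only self-dual codes are assigned a Type) I would first note that a self-dual code over $\FF_2+u\FF_2$ has $2^n$ codewords, so $\varphi(\mathcal{C})$ is a binary code of length $2n$ and size $2^n$ which is self-orthogonal by the previous paragraph and therefore self-dual, so that its Type is meaningful. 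Since the multiset of Lee weights of $\mathcal{C}$ equals the multiset of Hamming weights of $\varphi(\mathcal{C})$, the condition ``every Lee weight of $\mathcal{C}$ is divisible by $4$'' is literally the condition ``every Hamming weight of $\varphi(\mathcal{C})$ is divisible by $4$''; that is, $\mathcal{C}$ is Type II if and only if $\varphi(\mathcal{C})$ is Type II, and complementarily for Type I.

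I do not expect a genuine obstacle here: the proposition is essentially bookkeeping once $\varphi$ is recognised as an $\FF_2$-linear, weight-respecting bijection. The two points demanding care are getting the characteristic-$2$ cancellations in the inner-product computation exactly right (so that the cross terms of $\langle\varphi(\mathbf{x}),\varphi(\mathbf{y})\rangle$ reduce to precisely the two coefficients appearing in $\langle\mathbf{x},\mathbf{y}\rangle$) and explicitly recording that $\varphi(\mathcal{C})$ is self-dual, not merely self-orthogonal, before invoking the binary Type dichotomy.
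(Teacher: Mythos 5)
Your proof is correct. Note that the paper does not prove this proposition at all: it is quoted verbatim from the cited reference \cite{R-117}, so there is no in-paper argument to compare against. Your computation is the standard one for this Gray map — the coefficient-wise expansion of the two inner products showing $\langle\varphi(\mathbf{x}),\varphi(\mathbf{y})\rangle=\langle\mathbf{a},\mathbf{c}\rangle+\langle\mathbf{a},\mathbf{d}\rangle+\langle\mathbf{b},\mathbf{c}\rangle$, the weight-preserving bijectivity, and the observation (which you rightly make explicit, and which is the only point demanding care) that for the Type statement one needs $\varphi(\mathcal{C})$ to be self-dual, which follows from $\FF_2$-linearity of $\varphi$, self-orthogonality of the image, and the count $|\mathcal{C}|=2^n$.
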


The next corollary follows directly from Proposition \ref{proposition-1}.
	\begin{corollary}
		Let $\mathcal{C}$ be a self-dual code over $\FF_2+u\FF_2$ of length $n$ and minimum Lee distance $d$. Then $\varphi_{\FF_2+u\FF_2}(\mathcal{C})$ is a binary self-dual $[2n,n,d]$ code. Moreover, the Lee weight enumerator of $\mathcal{C}$ is equal to the Hamming weight enumerator of $\varphi_{\FF_2+u\FF_2}(\mathcal{C})$. If $\mathcal{C}$ is a Type I (resp. Type II) code, then $\varphi_{\FF_2+u\FF_2}(\mathcal{C})$ is a Type I (resp. Type II) code.
	\end{corollary}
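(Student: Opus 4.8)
The plan is to derive the corollary as a direct translation of Proposition~\ref{proposition-1} via the Gray map $\varphi_{\FF_2+u\FF_2}$. First I would recall that $\varphi_{\FF_2+u\FF_2}$ is an $\FF_2$-linear, bijective, orthogonality-preserving map from $(\FF_2+u\FF_2)^n$ to $\FF_2^{2n}$; consequently $\varphi_{\FF_2+u\FF_2}(\mathcal{C})$ is a binary linear code of length $2n$. Since $\mathcal{C}$ is self-dual it is in particular self-orthogonal, so Proposition~\ref{proposition-1} gives that $\varphi_{\FF_2+u\FF_2}(\mathcal{C})$ is self-orthogonal as well.

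Next I would pin down the dimension. As a self-dual code over $\FF_2+u\FF_2$, the code $\mathcal{C}$ has $|\mathcal{C}|=4^{n/2}=2^n$ (self-dual codes over a Frobenius ring of order $m$ and length $n$ have exactly $m^{n/2}$ codewords). Because $\varphi_{\FF_2+u\FF_2}$ is a bijection onto its image, $|\varphi_{\FF_2+u\FF_2}(\mathcal{C})|=2^n$, so the binary image has dimension $n$. A self-orthogonal binary code of length $2n$ and dimension $n$ is forced to equal its dual (its dual also has dimension $2n-n=n$ and contains it), hence $\varphi_{\FF_2+u\FF_2}(\mathcal{C})$ is self-dual.

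For the minimum distance, Proposition~\ref{proposition-1} states that the minimum Lee weight of $\mathcal{C}$ equals the minimum Hamming weight of $\varphi_{\FF_2+u\FF_2}(\mathcal{C})$, so the latter is $d$; thus $\varphi_{\FF_2+u\FF_2}(\mathcal{C})$ is a $[2n,n,d]$ code. The weight-enumerator claim follows because the Gray map sends a codeword of Lee weight $w$ to a binary word of Hamming weight $w$ and is a bijection, so the multiset of Lee weights of $\mathcal{C}$ coincides with the multiset of Hamming weights of $\varphi_{\FF_2+u\FF_2}(\mathcal{C})$, giving equality of the two weight enumerators. Finally, the Type I/Type II assertion is exactly the second sentence of Proposition~\ref{proposition-1}, applied to the now-established self-dual code. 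I do not anticipate any genuine obstacle here; the only point requiring a moment's care is justifying that the binary image has dimension exactly $n$ (equivalently, that $\mathcal{C}$ has $2^n$ codewords), which is where the Frobenius-ring self-dual cardinality fact is invoked.
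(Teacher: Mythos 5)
Your proposal is correct and matches the paper's intent: the paper simply states that the corollary follows directly from Proposition \ref{proposition-1}, and your argument fills in exactly the standard details (linearity and bijectivity of the Gray map, the cardinality count $|\mathcal{C}|=4^{n/2}=2^n$ forcing dimension $n$, and self-orthogonal plus dimension $n$ in length $2n$ forcing self-duality), with the distance, weight enumerator, and Type I/II claims read off from the proposition. No gaps.
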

	
\subsection{Special Matrices}

We now define and discuss the properties of some special matrices which we use in our work. Let $\vct{a}=(a_0,a_1,\ldots,a_{n-1})\in R^n$ where $R$ is a commutative ring and let
	\begin{equation*}
	A=\begin{pmatrix}
	a_0 & a_1 & a_2 & \cdots & a_{n-1}\\
	\lambda a_{n-1} & a_0 & a_1 & \cdots & a_{n-2}\\
	\lambda a_{n-2} & \lambda a_{n-1} & a_0 & \cdots & a_{n-3}\\
	\vdots & \vdots & \vdots & \ddots & \vdots\\
	\lambda a_1 & \lambda a_2 & \lambda a_3 & \cdots & a_0
	\end{pmatrix},
	\end{equation*}
where $\lambda\in R$. Then $A$ is called the $\lambda$-circulant matrix generated $\vct{a}$, denoted by $A=\pcir{\lambda}{\vct{a}}$. If $\lambda=1$, then $A$ is called the circulant matrix generated by $\vct{a}$ and is more simply denoted by $A=\cir{\vct{a}}$. If we define the matrix
	\begin{equation*}
	P_{\lambda}=\begin{pmatrix}
	\vct{0} & I_{n-1}\\
	\lambda & \vct{0}
	\end{pmatrix},
	\end{equation*}
then it follows that $A=\sum_{i=0}^{n-1}a_iP_{\lambda}^i$. Clearly, the sum of any two $\lambda$-circulant matrices is also a $\lambda$-circulant matrix. If $B=\pcir{\lambda}{\vct{b}}$ where $\vct{b}=(b_0,b_1,\ldots,b_{n-1})\in R^n$, then $AB=\sum_{i=0}^{n-1}\sum_{j=0}^{n-1}a_ib_jP_{\lambda}^{i+j}$. Since $P_{\lambda}^n=\lambda I_n$ there exist $c_k\in R$ such that $AB=\sum_{k=0}^{n-1}c_kP_{\lambda}^k$ so that $AB$ is also $\lambda$-circulant. In fact, it is true that
	\begin{equation*}
	c_{k}=\sum_{\substack{[i+j]_n=k\\i+j<n}}a_ib_j+\sum_{\substack{[i+j]_n=k\\i+j\geq n}}\lambda a_ib_j=\vct{x}_1\vct{y}_{k+1}
	\end{equation*}
for $k\in\{0,\ldots,n-1\}$, where $\vct{x}_i$ and $\vct{y}_i$ respectively denote the $i\nth$ row and column of $A$ and $B$ and $[i+j]_n$ denotes the smallest non-negative integer such that $[i+j]_n\equiv i+j\mod{n}$. From this, we can see that $\lambda$-circulant matrices commute multiplicatively and in fact the set of $\lambda$-circulant matrices over a commutative ring of fixed size is itself a commutative ring. Moreover, if $\lambda$ is a unit in $R$, then $A^T$ is $\lambda^{-1}$-circulant such that $A^T=a_0I_n+\lambda\sum_{i=1}^{n-1}a_{n-i}P_{\lambda^{-1}}^i$. It follows then that $AA^T$ is $\lambda$-circulant if and only if $\lambda$ is involutory in $R$, i.e. $\lambda^2=1$.

\section{The Construction}\label{section-3}
In this section, we present our technique for constructing self-dual codes. We will hereafter always assume $R$ is a finite commutative Frobenius ring of characteristic 2.
	\begin{theorem}\label{theorem-1}
	Let $n\in\ZZ^+$ and let
		\begin{align*}
			G=
			\begin{pmatrix}[c|c|cc]
				\vct{v} & \vct{0} & \xi_5 & \xi_6\\\midrule
				I_{2n} & X & \vct{v}^T & \vct{v}^T
			\end{pmatrix},\quad\text{where }
			X=\begin{pmatrix}
			AC & B\\
			B^TC & A^T
			\end{pmatrix}
		\end{align*}
	where $\vct{v}=(\vct{v}_1,\vct{v}_2)$ such that
		\begin{align*}
			\vct{v}_1&=(\vctg{\xi}_1,\xi_2)\in R^n,\\
			\vct{v}_2&=(\vctg{\xi}_3,\xi_4,\xi_4)\in R^n,
		\end{align*}
	with $\vctg{\xi}_{2j-1}=(\xi_{2j-1},\xi_{2j-1},\ldots,\xi_{2j-1})\in R^{n-j}$ for $j\in\{1,2\}$ and $\xi_i\in R$ for $i\in\{1,\ldots,6\}$ also with $A=\pcir{\lambda}{\vct{a}}$, $B=\pcir{\lambda}{\vct{b}}$ and $C=\pcir{\mu}{\vct{c}}$ for $\vct{a},\vct{b},\vct{c}\in R^n$ and $\lambda,\mu\in R:\lambda^2=\mu^2=1$. Then $G$ is a generator matrix of a self-dual code of length $2(2n+1)$ if and only if
		\begin{empheq}[left=\empheqlbrace]{align*}
			AA^T+BB^T&=I_n,\\
			CC^T&=I_n,\\
			\xi_{n'}^2+\xi_2^2+\xi_5^2+\xi_6^2&=0,\\
			\xi_j(\xi_5+\xi_6+1)&=0,\quad j\in\{1,\ldots,4\},
		\end{empheq}
	and the free rank of $(\vct{v}_1A+\vct{v}_2B^T,\vct{v}_1B+\vct{v}_2A^T,\xi_5,\xi_6)$ is 1, where $n'=2[n]_2+1$.
	\end{theorem}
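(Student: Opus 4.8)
The plan is to translate the condition ``$G$ generates a self-dual code'' into the matrix identity $GG^T = 0$ together with a rank condition ensuring the code has the correct dimension, namely that the $2n+1$ rows of $G$ are such that the code they span is self-dual of length $2(2n+1)$. Since $R$ has characteristic $2$ and $G$ has the block form $(\,\vct{w} \mid I_{2n}\; X \mid \vct{v}^T\;\vct{v}^T\,)$ in a suitable row/column arrangement (one ``border'' row on top and the $2n$ rows $(I_{2n} \mid X \mid \vct{v}^T \vct{v}^T)$ below), self-orthogonality $GG^T=0$ splits into three kinds of equations: the inner product of the top row with itself, the inner products of the top row with each of the lower rows, and the inner products of the lower rows among themselves. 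I would compute these three blocks separately.

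\textbf{Lower-block self-orthogonality.} For the $2n \times 2n$ submodule generated by $(I_{2n} \mid X \mid \vct{v}^T \vct{v}^T)$, the Gram matrix is $I_{2n} + XX^T + \vct{v}^T\vct{v} + \vct{v}^T\vct{v} = I_{2n} + XX^T$ since we are in characteristic $2$ (the two copies of $\vct{v}^T\vct{v}$ cancel). So I need $XX^T = I_{2n}$. Writing $X = \begin{pmatrix} AC & B \\ B^TC & A^T\end{pmatrix}$ and using that $C$ is $\mu$-circulant with $\mu^2=1$, hence $CC^T$ is $\mu$-circulant and the hypothesis $CC^T = I_n$ forces $C$ to be orthogonal, a direct block multiplication gives
\begin{align*}
XX^T = \begin{pmatrix} AC C^T A^T + BB^T & AC C^T B + BA \\ B^T C C^T A^T + A^T B^T & B^TCC^TB + A^TA \end{pmatrix}.
\end{align*}
Using $CC^T = I_n$ and the commutativity of $\lambda$-circulant matrices (so $AB = BA$, $A^TB^T = B^TA^T$, and $A^TA = AA^T$), the off-diagonal blocks become $AB + BA = 0$ and $B^TA^T + A^TB^T = 0$ automatically in characteristic $2$, while the diagonal blocks are $AA^T + BB^T$ and $BB^T + AA^T$ (here I also use that $B^TB = BB^T$, which holds since $B$ is $\lambda$-circulant with $\lambda$ involutory). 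Hence $XX^T = I_{2n}$ is equivalent to the single condition $AA^T + BB^T = I_n$.

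\textbf{Border row conditions.} The inner product of the top row with itself is $\vct{v}\vct{v}^T + \xi_5^2 + \xi_6^2 = \vct{v}_1\vct{v}_1^T + \vct{v}_2\vct{v}_2^T + \xi_5^2 + \xi_6^2$, which by the explicit constant-block structure of $\vct{v}_1,\vct{v}_2$ evaluates to $(n-1)\xi_1^2 + \xi_2^2 + (n-2)\xi_3^2 + 2\xi_4^2 + \xi_5^2+\xi_6^2$; reducing the integer coefficients mod $2$ collapses this to $\xi_{n'}^2 + \xi_2^2 + \xi_5^2 + \xi_6^2$ with $n' = 2[n]_2+1$ exactly as stated (the parity of $n$ selects whether the surviving $\xi_1^2$ or $\xi_3^2$ term, i.e.\ $\xi_{n'}$, appears). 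The inner product of the top row with the $i$-th lower row is $v_i + (\text{$i$-th entry of } X\vct{v}^T) + \xi_5 v_i + \xi_6 v_i$ — wait, more carefully: the top row is $(\vct{v} \mid \vct{0} \mid \xi_5\; \xi_6)$ and the $i$-th lower row is $(\vct{e}_i \mid (\text{row $i$ of }X) \mid v_i\; v_i)$, so their inner product is $v_i + 0 + \xi_5 v_i + \xi_6 v_i = v_i(1+\xi_5+\xi_6)$, and ranging over $i$ this is precisely the family $\xi_j(\xi_5+\xi_6+1)=0$ for $j\in\{1,\dots,4\}$ once we note that the entries of $\vct{v}$ take only the four values $\xi_1,\xi_2,\xi_3,\xi_4$. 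This gives all three scalar/vector conditions in the system.

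\textbf{The rank (self-duality vs.\ self-orthogonality) condition.} At this point $GG^T = 0$, so the code $\mathcal{C}$ generated by $G$ is self-orthogonal; to get self-duality over a Frobenius ring it suffices (by the standard duality cardinality argument, $|\mathcal{C}|\,|\mathcal{C}^\perp| = |R|^{2(2n+1)}$) that $\mathcal{C}$ be a free code of rank exactly $2n+1$, equivalently that the rows of $G$ be ``independent enough.'' The $2n$ lower rows are automatically free of rank $2n$ because of the identity block $I_{2n}$. So the whole question reduces to whether the top row contributes a genuinely new free generator, i.e.\ whether reducing $G$ by the lower rows leaves a row whose remaining part generates a free rank-$1$ submodule. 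Eliminating the identity block: the top row minus $\sum_i v_i \cdot(\text{lower row }i)$ kills the first $2n$ coordinates and leaves, in the $X$-columns, $-\vct{v}X = \vct{v}X$ (char $2$), whose two halves are $\vct{v}_1(AC) + \vct{v}_2(B^TC) = (\vct{v}_1 A + \vct{v}_2 B^T)C$ and $\vct{v}_1 B + \vct{v}_2 A^T$; since $C$ is invertible this has the same free rank as $(\vct{v}_1A + \vct{v}_2 B^T,\ \vct{v}_1 B + \vct{v}_2 A^T)$, and in the last two coordinates it leaves $\xi_5,\xi_6$. Hence $\mathcal{C}$ is free of rank $2n+1$ iff the free rank of $(\vct{v}_1A + \vct{v}_2B^T,\ \vct{v}_1B + \vct{v}_2 A^T,\ \xi_5,\xi_6)$ is $1$, which is the last condition. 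I would be careful here to invoke the correct notion of free rank over a (possibly non-principal) commutative Frobenius ring and to justify that self-orthogonal + free of half-length $\Rightarrow$ self-dual; that Frobenius-ring bookkeeping, rather than any computation, is the main obstacle, since everything else is bilinear algebra with circulant commutativity and characteristic-$2$ cancellation.
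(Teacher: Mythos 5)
Your overall strategy is the same as the paper's (compute $GG^T$ blockwise, then settle the dimension by clearing the identity block and examining what the border row leaves behind), but as written it has two genuine gaps, both in the ``only if'' direction. First, the condition $CC^T=I_n$: your block computation shows that \emph{assuming} $CC^T=I_n$, the identity $XX^T=I_{2n}$ is equivalent to $AA^T+BB^T=I_n$. That gives the ``if'' direction, but the theorem also asserts that self-duality \emph{forces} $CC^T=I_n$, and you never derive this — you treat it as a hypothesis (``the hypothesis $CC^T=I_n$ forces $C$ to be orthogonal''). Written out, $XX^T=I_{2n}$ reads $ACC^TA^T+BB^T=I_n$, $A(CC^T+I_n)B=0$, $B^TCC^TB+A^TA=I_n$, and extracting $CC^T=I_n$ together with $AA^T+BB^T=I_n$ from these is a real argument, not a formality; the paper gets it by citing Lemma 3.1 of \cite{AMR4}. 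The gap propagates: your rank reduction cancels the factor $C$ ``since $C$ is invertible'', but in the only-if direction the invertibility of $C$ is only available once $CC^T=I_n$ has been deduced from $GG^T=0$.

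Second, the rank reduction itself contains a false intermediate step. The last two coordinates of the $i$-th lower row are $v_i,v_i$, so subtracting $\sum_i v_i\cdot(\text{lower row } i)$ from the border row leaves $(\vct{0},\ \vct{v}X,\ \xi_5+\vct{v}\vct{v}^T,\ \xi_6+\vct{v}\vct{v}^T)$, not $(\vct{0},\ \vct{v}X,\ \xi_5,\ \xi_6)$ as you claim. Replacing the last two entries by $\xi_5,\xi_6$ without changing the free rank requires the self-orthogonality relation $\vct{v}\vct{v}^T=\xi_{n'}^2+\xi_2^2=\xi_5^2+\xi_6^2$ together with a short chain of column operations in characteristic $2$ (the paper passes through $(\xi_5+\xi_5^2+\xi_6^2,\ \xi_5+\xi_6)$ and then adds $(\xi_5+\xi_6)$ times the last column to cancel the square terms); this step is needed and is missing from your argument. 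The remaining ingredients — the Gram matrix of the lower rows reducing to $I_{2n}+XX^T$, the evaluation of $\vct{v}\vct{v}^T$ and its parity collapse to $\xi_{n'}^2+\xi_2^2$, the products $v_i(1+\xi_5+\xi_6)$ giving the conditions $\xi_j(\xi_5+\xi_6+1)=0$, and the Frobenius-ring counting that self-orthogonal plus free rank $2n+1$ yields self-duality — agree with the paper.
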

		\begin{proof}
			First, let us determine the conditions required for $G$ to be a generator matrix of a self-orthogonal code. We have
				\begin{equation*}
					GG^T=
					\begin{pmatrix}[c|c|cc]
						\vct{v} & \vct{0} & \xi_5 & \xi_6\\\midrule
						I_{2n} & X & \vct{v}^T & \vct{v}^T
					\end{pmatrix}
					\begin{pmatrix}[c|c]
						\vct{v}^T & I_{2n}\\\midrule
						\vct{0} & X^T\\\midrule
						\xi_5 & \vct{v}\\
						\xi_6 & \vct{v}
					\end{pmatrix}
					=
					\begin{pmatrix}
						g_{1,1} & g_{1,2}\\
						g_{1,2}^T & g_{2,2}
					\end{pmatrix}
				\end{equation*}
			where
				\begin{align*}
					g_{1,1}&=\vct{v}\vct{v}^T+\xi_5^2+\xi_6^2,\\
					g_{1,2}&=(\xi_5+\xi_6+1)\vct{v},\\
					g_{2,2}&=XX^T+I_{2n}+2\vct{v}\vct{v}^T
				\end{align*}
			so that $GG^T=\vct{0}$ if and only if $g_{1,1}=0$, $g_{1,2}=\vct{0}$ and $g_{2,2}=\vct{0}$. Since $R$ is of characteristic 2, we have
				\begin{align*}
					g_{1,1}&=\vct{v}\vct{v}^T+\xi_5^2+\xi_6^2\\
				    &=(n-1)\xi_1^2+\xi_2^2+(n-2)\xi_3^2+2\xi_4^2+\xi_5^2+\xi_6^2\\
				    &=(n-1)\xi_1^2+\xi_2^2+(n-2)\xi_3^2+\xi_5^2+\xi_6^2\\
				    &=\begin{cases}
					    \xi_1^2+\xi_2^2+\xi_5^2+\xi_6^2,&n\text{ is even},\\
				    	\xi_3^2+\xi_2^2+\xi_5^2+\xi_6^2,&n\text{ is odd},
				    \end{cases}\\
				    &=\xi_{n'}^2+\xi_2^2+\xi_5^2+\xi_6^2,
				\end{align*}		
			where $n'=2[n]_2+1$ (recall that $[n]_2$ is the smallest non-negative integer such that $[n]_2\equiv n\mod{2}$), 
			so $g_{1,1}=0$ if and only if $\xi_{n'}^2+\xi_2^2+\xi_5^2+\xi_6^2=0$. We also have $2\vct{v}\vct{v}^T=\vct{0}$, so $g_{2,2}=\vct{0}$ if and only if $XX^T=I_{2n}$. By Lemma 3.1 of \cite{AMR4}, $XX^T=I_{2n}$ if and only if $AA^T+BB^T=I_n$ and $CC^T=I_n$. Finally, we see that $g_{1,2}=\vct{0}$ if and only if $\xi_j(\xi_5+\xi_6+1)=0$ for $j\in\{1,\ldots,4\}$.
			
			Assume now that $G$ is a matrix of a self-orthogonal code. We need to prove that the free rank of $G$ is $2n+1$ if and only if the free rank of $(\vct{v}_1A+\vct{v}_2B^T,\vct{v}_1B+\vct{v}_2A^T,\xi_5,\xi_6)$ is 1. The free rank of $G$ is unchanged by elementary row (or column) operations and premultiplication (or postmultiplication) by an invertible matrix of appropriate size. Let $\tilde{G}=GM$ where
				\begin{align*}
					M=\begin{pmatrix}[c|c|c]
						I_{2n} & X & \begin{matrix}\vct{v}^T & \vct{v}^T\end{matrix}\\\midrule
						\vct{0} & I_{2n} & \vct{0}\\\midrule
						\vct{0} & \vct{0} & I_2
					\end{pmatrix}.
				\end{align*}
				
			Let $\rank{}$ denote the free rank of a matrix over $R$. It is clear that $M$ is invertible and hence $\rank{\tilde{G}}=\rank{G}$. We have that
				\begin{align*}
					GM&=
					\begin{pmatrix}[c|c|cc]
						\vct{v} & \vct{0} & \xi_5 & \xi_6\\\midrule
						I_{2n} & X & \vct{v}^T & \vct{v}^T
					\end{pmatrix}
					\begin{pmatrix}[c|c|cc]
						I_{2n} & X & \begin{matrix} \vct{v}^T & \vct{v}^T\end{matrix}\\\midrule
						\vct{0} & I_{2n} & \vct{0}\\\midrule
						\vct{0} & \vct{0} & I_2
					\end{pmatrix}\\&=
					\begin{pmatrix}
						\vct{v} & \vct{v}X & \vct{v}(\vct{v}^T,\vct{v}^T)+(\xi_5,\xi_6)\\
						I_{2n} & \vct{0} & \vct{0}
					\end{pmatrix}\\&=
					\begin{pmatrix}
						\vct{v} & \vct{v}X & (\vct{v}\vct{v}^T+\xi_5,\vct{v}\vct{v}^T+\xi_6)\\
						I_{2n} & \vct{0} & \vct{0}
					\end{pmatrix}.
				\end{align*}
				
			Let $r=\rank{(\vct{v}X,\vct{v}\vct{v}^T+\xi_5,\vct{v}\vct{v}^T+\xi_6)}$. Then $\rank{\tilde{G}}=2n+1$ if and only if $r=1$. We see that
				\begin{align*}
					\vct{v}X&=
					(\vct{v}_1,\vct{v}_2)
					\begin{pmatrix}
						AC & B\\
						B^TC & A^T
					\end{pmatrix}\\&=
					((\vct{v}_1A+\vct{v}_2B^T)C,\vct{v}_1B+\vct{v}_2A^T).
				\end{align*}
			and
				\begin{align*}
					(\vct{v}\vct{v}^T+\xi_5,\vct{v}\vct{v}^T+\xi_6)=(\xi_5+\xi_{n'}^2+\xi_2^2,\xi_6+\xi_{n'}^2+\xi_2^2).
				\end{align*}
				
			Since $G$ is a generator matrix of a self-orthogonal code, we have $\xi_{n'}^2+\xi_2^2+\xi_5^2+\xi_6^2=0$ so that $\xi_{n'}^2+\xi_2^2=\xi_5^2+\xi_6^2$. By elementary column operations we obtain
				\begin{align*}
					r&=\rank{(\vct{v}X,\vct{v}\vct{v}^T+\xi_5,\vct{v}\vct{v}^T+\xi_6)}\\&=
					\rank{(\vct{v}X,\xi_5+\xi_{n'}^2+\xi_2^2,\xi_6+\xi_{n'}^2+\xi_2^2)}\\&=
					\rank{(\vct{v}X,\xi_5+\xi_5^2+\xi_6^2,\xi_6+\xi_5^2+\xi_6^2)}\\&=
					\rank{(\vct{v}X,\xi_5+\xi_5^2+\xi_6^2,\xi_5+\xi_6)}\\&=
					\rank{(\vct{v}X,\xi_5+\xi_5^2+\xi_6^2+(\xi_5+\xi_6)^2,\xi_5+\xi_6)}\\&=
					\rank{(\vct{v}X,\xi_5,\xi_5+\xi_6)}\\&=
					\rank{(\vct{v}X,\xi_5,\xi_6)}.
				\end{align*}
				
			We also have that $CC^T=I_n$ so that $C$ is invertible. Thus, we get
				\begin{align*}
					r&=\rank{(\vct{v}X,\xi_5,\xi_6)}\\&=
					\rank{((\vct{v}_1A+\vct{v}_2B^T)C,\vct{v}_1B+\vct{v}_2A^T,\xi_5,\xi_6}\\&=
					\rank{(\vct{v}_1A+\vct{v}_2B^T,\vct{v}_1B+\vct{v}_2A^T,\xi_5,\xi_6)}
				\end{align*}
			and so $\rank{G}=2n+1$ if and only if $\rank{(\vct{v}_1A+\vct{v}_2B^T,\vct{v}_1B+\vct{v}_2A^T,\xi_5,\xi_6)}=1$.
		\end{proof}

\section{Results}\label{section-4}

In this section, we apply Theorem \ref{theorem-1} to obtain many new extremal, optimal and best known binary self-dual codes. In particular, we obtain 7 new extremal codes of length $68$, 18 new best known codes of length $82$ and 12 new best known codes of length $94$.

	\begin{remark}\label{remark-1}
	Two binary self-dual codes of length $2n$ are said to be neighbours if their intersection has dimension $n-1$. Let $\mathcal{C}^*$ be a binary self-dual code of length $2n$ and let $\vct{x}\in\FF_2^n\setminus \mathcal{C}^*$. Then $\mathcal{C}=\langle\langle\vct{x}\rangle^{\perp}\cap\mathcal{C}^*,\vct{x}\rangle$ is a neighbour of $\mathcal{C}^*$, where $\langle\vct{x}\rangle$ denotes the code generated by $\vct{x}$.
	\end{remark}
	
Using Remark \ref{remark-1}, we obtain one new optimal code of length $54$ and 7 new extremal codes of length $68$ as neighbours of codes constructed by applying Theorem \ref{theorem-1}.

We conduct the search for these codes using MATLAB and Magma \cite{Magma} and determine their properties using Q-extension \cite{Q-extension} and Magma. In MATLAB, we employ an algorithm which randomly searches for the construction parameters that satisfy the necessary and sufficient conditions stated in Theorem \ref{theorem-1}. For such parameters, we then build the corresponding binary generator matrices and print them to text files. We then use Q-extension to read these text files and determine the minimum distance and partial weight enumerator of each corresponding code. Furthermore, we determine the automorphism
group order of each code using Magma. Magma is used to search for neighbours as described in Remark \ref{remark-1}. A database of generator matrices of the new codes is given online at \cite{GMD5}. The database is partitioned into text files (interpretable by Q-extension) corresponding to each code type. In these files, specific properties of the codes including the construction parameters, weight enumerator parameter values and automorphism group order are formatted as comments above the generator matrices. Partial weight enumerators of the codes are also formatted as comments below the generator matrices. Table \ref{table-1} gives the quaternary notation system we use to represent elements of $\FF_2+u\FF_2$.

	\begin{table}
	\caption{Quaternary notation system for elements of $\FF_2+u\FF_2$.}\label{table-1}
	\centering
	\begin{adjustbox}{max width=\textwidth}
	\footnotesize
	\setlength{\tabcolsep}{4pt}
	\begin{tabular}{cc}\midrule
	$\FF_2+u\FF_2$ & Symbol\\\midrule
	$0$   & \texttt{0}\\
	$1$   & \texttt{1}\\
	$u$   & \texttt{2}\\
	$1+u$ & \texttt{3}\\\midrule
	\end{tabular}
	\end{adjustbox}
	\end{table}
	
\subsection{New Self-Dual Code of Length 54}
	
The possible weight enumerators of a binary self-dual $[54,27,10]$ code are given in \cite{R-029} as
	\begin{align*}
	    W_{54,1}&=1+(351-8\alpha)x^{10}+(5031+24\alpha)x^{12}+\cdots,\\
	    W_{54,2}&=1+(351-8\alpha)x^{10}+(5543+24\alpha)x^{12}+\cdots,
	\end{align*}
where $\alpha\in\ZZ$. Previously known $\alpha$ values for weight enumerator $W_{54,1}$ can be found online at \cite{WEPD} (see \cite{R-029,R-056,R-203,R-038,R-132,R-201,R-044,R-202,R-114,R-195,R-194,R-183}).

We obtain one new optimal binary self-dual code of length 54 which has weight enumerator $W_{54,1}$ for
	\begin{enumerate}[label=]
        \item $\alpha=23$.
	\end{enumerate}	
	
The new code is constructed by first applying Theorem \ref{theorem-1} to obtain a code of length 54 over $\FF_2$ (Table \ref{table-54-1}) and then searching for neighbours of this code using Remark \ref{remark-1} (Table \ref{table-54-2}).

	\begin{table}
	\caption{Code of length 54 over $\FF_2$ from Theorem \ref{theorem-1} to which we apply Remark \ref{remark-1} to obtain the code in Table \ref{table-54-2}, where $\vctg{\xi}=(\xi_1,\xi_2,\xi_3,\xi_4,\xi_5,\xi_6)$.}\label{table-54-1}
	\centering
	\begin{adjustbox}{max width=\textwidth}
	\footnotesize
	\setlength{\tabcolsep}{4pt}
	\begin{tabular}{ccccc}\midrule
	$\mathcal{C}_{54,i}^*$ & $\vct{a}$ & $\vct{b}$ & $\vct{c}$ & $\vctg{\xi}$\\\midrule
	1 & \texttt{(0111000101101)} & \texttt{(1101110000100)} & \texttt{(0101111110011)} & \texttt{(001101)}\\\midrule
	\end{tabular}
	\end{adjustbox}
	\end{table}

	\begin{table}
	\caption{New binary self-dual $[54,27,10]$ code from searching for neighbours of $\mathcal{C}_{54,j}^*$ as given in Table \ref{table-54-1} using Remark \ref{remark-1} with $\vct{x}=(\vct{0},\vct{x}_0)$.}\label{table-54-2}
	\centering
	\begin{adjustbox}{max width=\textwidth}
	\footnotesize
	\setlength{\tabcolsep}{4pt}
	\begin{tabular}{cccccc}\midrule
	$\mathcal{C}_{54,i}$ & $\mathcal{C}_{54,j}^*$ & $\vct{x}_0$ & $W_{54,k}$ & $\alpha$ & $|\aut{\mathcal{C}_{54,i}}|$\\\midrule
	1 & 1 & \texttt{(000001100101001000111101101)} & 1 & $23$ & $3$\\\midrule
	\end{tabular}
	\end{adjustbox}
	\end{table}

\subsection{New Self-Dual Codes of Length 68}

The possible weight enumerators of a binary self-dual $[68,34,12]$ code are given in \cite{R-132} as
	\begin{align*}
	    W_{68,1}&=1+(442+4\alpha)x^{12}+(10864-8\alpha)x^{12}+\cdots,\\
	    W_{68,2}&=1+(442+4\alpha)x^{12}+(14960-8\alpha-256\beta)x^{12}+\cdots,
	\end{align*}
where $\alpha,\beta\in\ZZ$. Previously known $(\alpha,\beta)$ values for weight enumerators $W_{68,1}$ and $W_{68,2}$ can be found online at \cite{WEPD} (see \cite{R-029,R-057,R-198,R-132,R-199,R-136,R-196,R-197,R-048,R-126,R-127,R-122,R-072,R-200,R-176,R-071,R-042,R-008,R-046,R-043,R-108,R-106,R-025,R-080,R-097,R-105,R-107,R-096,R-103,R-113,R-110,R-099,R-101,R-104,R-100,R-067,R-111,R-151}).

We obtain 14 new extremal binary self-dual codes of length 68 of which 8 have weight enumerator $W_{68,1}$ for
	\begin{enumerate}[label=]
        \item $\alpha\in\{110,\lb113,\lb114,\lb116,\lb118,\lb121,\lb123,\lb124\}$
	\end{enumerate}	
and 6 have weight enumerator $W_{68,2}$ for
	\begin{enumerate}[label=]
	    \item $\beta=1$ and $\alpha\in\{20,\lb28,\lb32,\lb34,\lb36,\lb37\}$.
	\end{enumerate}	
	
Of the 14 new codes, 7 are constructed by applying Theorem \ref{theorem-1} over $\FF_2+u\FF_2$ (Table \ref{table-68-1}) and 7 are constructed by first applying Theorem \ref{theorem-1} to obtain a code of length 34 over $\FF_2+u\FF_2$ (Table \ref{table-68NBR-1}) and then searching for neighbours of the image of this code under $\varphi_{\FF_2+u\FF_2}$ using Remark \ref{remark-1} (Table \ref{table-68NBR-2}).	

	\begin{table}
	\caption{New binary self-dual $[68,34,12]$ codes from Theorem \ref{theorem-1} over $\FF_2+u\FF_2$, where $\vctg{\xi}=(\xi_1,\xi_2,\xi_3,\xi_4,\xi_5,\xi_6)$.}\label{table-68-1}
	\centering
	\begin{adjustbox}{max width=\textwidth}
	\footnotesize
	\setlength{\tabcolsep}{4pt}
	\begin{tabular}{ccccccccccc}\midrule
	$\mathcal{C}_{68,i}$ & $\lambda$ & $\mu$ & $\vct{a}$ & $\vct{b}$ & $\vct{c}$ & $\vctg{\xi}$ & $W_{68,j}$ & $\alpha$ & $\beta$ & $|\aut{\mathcal{C}_{68,i}}|$\\\midrule
	1 & \texttt{1} & \texttt{1} & \texttt{(22120031)} & \texttt{(02331100)} & \texttt{(33331213)} & \texttt{(101132)} & 1 & $110$ & $-$ & $2$\\
	2 & \texttt{1} & \texttt{1} & \texttt{(10021300)} & \texttt{(31232012)} & \texttt{(30313131)} & \texttt{(120023)} & 1 & $124$ & $-$ & $2$\\
	3 & \texttt{1} & \texttt{1} & \texttt{(01323103)} & \texttt{(20022123)} & \texttt{(00300222)} & \texttt{(013332)} & 2 & $20$ & $1$ & $2$\\
	4 & \texttt{1} & \texttt{3} & \texttt{(01230200)} & \texttt{(13010312)} & \texttt{(22003002)} & \texttt{(102232)} & 2 & $28$ & $1$ & $2$\\
	5 & \texttt{1} & \texttt{1} & \texttt{(31221023)} & \texttt{(30003111)} & \texttt{(13012103)} & \texttt{(233310)} & 2 & $32$ & $1$ & $2$\\
	6 & \texttt{1} & \texttt{1} & \texttt{(03210210)} & \texttt{(32221121)} & \texttt{(13331101)} & \texttt{(122201)} & 2 & $34$ & $1$ & $2$\\
	7 & \texttt{1} & \texttt{1} & \texttt{(00030320)} & \texttt{(21031233)} & \texttt{(32100012)} & \texttt{(122201)} & 2 & $36$ & $1$ & $2$\\\midrule
	\end{tabular}
	\end{adjustbox}
	\end{table}

	\begin{table}
	\caption{Code of length 34 over $\FF_2+u\FF_2$ from Theorem \ref{theorem-1} to the image of which under $\varphi_{\FF_2+u\FF_2}$ we then apply Remark \ref{remark-1} to obtain the codes in Table \ref{table-68NBR-2}, where $\vctg{\xi}=(\xi_1,\xi_2,\xi_3,\xi_4,\xi_5,\xi_6)$.}\label{table-68NBR-1}
	\centering
	\begin{adjustbox}{max width=\textwidth}
	\footnotesize
	\setlength{\tabcolsep}{4pt}
	\begin{tabular}{ccccccc}\midrule
	$\mathcal{C}_{34,i}^*$ & $\lambda$ & $\mu$ & $\vct{a}$ & $\vct{b}$ & $\vct{c}$ & $\vctg{\xi}$\\\midrule
	1 & \texttt{1} & \texttt{1} & \texttt{(01323103)} & \texttt{(20022123)} & \texttt{(00300222)} & \texttt{(013332)}\\\midrule
	\end{tabular}
	\end{adjustbox}
	\end{table}
	
	\begin{table}
	\caption{New binary self-dual $[68,34,12]$ codes from searching for neighbours of $\varphi_{\FF_2+u\FF_2}(\mathcal{C}_{34,j}^*)$ using Remark \ref{remark-1} with $\vct{x}=(\vct{0},\vct{x}_0)$, where $\mathcal{C}_{34,j}^*$ are as given in Table \ref{table-68NBR-1}.}\label{table-68NBR-2}
	\centering
	\begin{adjustbox}{max width=\textwidth}
	\footnotesize
	\setlength{\tabcolsep}{4pt}
	\begin{tabular}{ccccccc}\midrule
	$\mathcal{C}_{68,i}$ & $\mathcal{C}_{34,j}^*$ & $\vct{x}_0$ & $W_{68,k}$ & $\alpha$ & $\beta$ & $|\aut{\mathcal{C}_{68,i}}|$\\\midrule
	8 & 1 & \texttt{(0101010011111010001101100011011100)} & 1 & $113$ & $-$ & $1$\\
	9 & 1 & \texttt{(1110010011100001110010110111100100)} & 1 & $114$ & $-$ & $1$\\
	10 & 1 & \texttt{(1010100100010111000000100111010111)} & 1 & $116$ & $-$ & $1$\\
	11 & 1 & \texttt{(0011000011011101010101010100010000)} & 1 & $118$ & $-$ & $1$\\
	12 & 1 & \texttt{(0101010001111010000101100011011111)} & 1 & $121$ & $-$ & $1$\\
	13 & 1 & \texttt{(0011001001011000000110010111110101)} & 1 & $123$ & $-$ & $1$\\
	14 & 1 & \texttt{(0101110101111010001101100011011101)} & 2 & $37$ & $1$ & $1$\\\midrule
	\end{tabular}
	\end{adjustbox}
	\end{table}
	
\subsection{New Self-Dual Codes of Length 82}

The possible weight enumerators of a binary self-dual $[82,41,14]$ code are given in \cite{R-138} as
	\begin{align*}
	    W_{82,1}&=1+560x^{14}+60724x^{16}+233545x^{18}+\cdots,\\
	    W_{82,2}&=1+(3280+2\alpha)x^{14}+(36244-2\alpha+128\beta)x^{16}\\&\quad+(506153-26\alpha-896\beta)x^{18}+\cdots,\\
	    W_{82,3}&=1+(3280+2\alpha)x^{14}+(36244-2\alpha+128\beta)x^{16}\\&\quad+(514345-26\alpha-896\beta)x^{18}+\cdots,
	\end{align*}
where $\alpha,\beta\in\ZZ$. Previously known $(\alpha,\beta)$ values for weight enumerators $W_{82,2}$ and $W_{82,3}$ can be found online at \cite{WEPD} (see \cite{R-044,R-085,R-138}).

We obtain 18 new best known binary self-dual codes of length 82 of which 7 have weight enumerator $W_{82,2}$ for
	\begin{enumerate}[label=]
        \item $\beta=18$ and $\alpha\in\{-2z:z=331,\lb344,\lb353,\lb357,\lb367,\lb368,\lb369\}$
	\end{enumerate}	
and 11 have weight enumerator $W_{82,3}$ for
	\begin{enumerate}[label=]
	    \item $\beta=0$ and $\alpha\in\{-2z:z=388,\lb389,\lb393,\lb399,\lb406,\lb408,\lb414\}$;
	    \item $\beta=1$ and $\alpha\in\{-2z:z=409\}$;
	    \item $\beta=2$ and $\alpha\in\{-2z:z=409,\lb419\}$;
	    \item $\beta=5$ and $\alpha\in\{-2z:z=427\}$.
	\end{enumerate}

The new codes are constructed by applying Theorem \ref{theorem-1} over $\FF_2$ (Table \ref{table-82}).

	\begin{table}
	\caption{New binary self-dual $[82,41,14]$ codes from Theorem \ref{theorem-1} over $\FF_2$, where $\vctg{\xi}=(\xi_1,\xi_2,\xi_3,\xi_4,\xi_5,\xi_6)$.}\label{table-82}
	\centering
	\begin{adjustbox}{max width=\textwidth}
	\footnotesize
	\setlength{\tabcolsep}{4pt}
	\begin{tabular}{ccccc}\midrule
	$\mathcal{C}_{82,i}$ & $\vct{a}$ & $\vct{b}$ & $\vct{c}$ & $\vctg{\xi}$\\\midrule
	1 & \texttt{(00110011100000000110)} & \texttt{(00100110011101010011)} & \texttt{(00010010010001000001)} & \texttt{(101010)}\\
	2 & \texttt{(11001011011010110101)} & \texttt{(10011010011011010000)} & \texttt{(01010011100101001010)} & \texttt{(101010)}\\
	3 & \texttt{(00011110011001011110)} & \texttt{(01010101010011110100)} & \texttt{(10101110111000111011)} & \texttt{(101010)}\\
	4 & \texttt{(00000110100111111111)} & \texttt{(00110110000111101000)} & \texttt{(11111011010111011000)} & \texttt{(101001)}\\
	5 & \texttt{(11100011011110101011)} & \texttt{(11110001101100110011)} & \texttt{(00100010100000001010)} & \texttt{(101010)}\\
	6 & \texttt{(11111110010110010010)} & \texttt{(10001001101001001110)} & \texttt{(01111010111110011001)} & \texttt{(101001)}\\
	7 & \texttt{(00111010001011010100)} & \texttt{(11001010111101110001)} & \texttt{(10001100011010110001)} & \texttt{(101010)}\\
	8 & \texttt{(00110011011011110001)} & \texttt{(00101110100101000100)} & \texttt{(10110001110000000001)} & \texttt{(101110)}\\
	9 & \texttt{(10000011001000100011)} & \texttt{(00110001010001110100)} & \texttt{(00010001110001000101)} & \texttt{(101101)}\\
	10 & \texttt{(11101110100101100010)} & \texttt{(01110011001100110001)} & \texttt{(00010100000110011010)} & \texttt{(101101)}\\
	11 & \texttt{(00011011111101000011)} & \texttt{(11000000001100111001)} & \texttt{(10100000101010010010)} & \texttt{(101110)}\\
	12 & \texttt{(00011110101110000110)} & \texttt{(11000011010011000101)} & \texttt{(01001010001111101110)} & \texttt{(101110)}\\
	13 & \texttt{(00100000101100010000)} & \texttt{(11010101010010100011)} & \texttt{(01011101110000111001)} & \texttt{(101101)}\\
	14 & \texttt{(10001111010001011100)} & \texttt{(00000001010010011000)} & \texttt{(01101011111010000110)} & \texttt{(101101)}\\
	15 & \texttt{(10011111001010110001)} & \texttt{(11000010101110010110)} & \texttt{(01000011001011110111)} & \texttt{(101110)}\\
	16 & \texttt{(11100100001011100001)} & \texttt{(00101100110000110100)} & \texttt{(00011111001001111100)} & \texttt{(101101)}\\
	17 & \texttt{(10001110110000101100)} & \texttt{(00111010000111110010)} & \texttt{(01110111101001100001)} & \texttt{(101110)}\\
	18 & \texttt{(00001101111100100101)} & \texttt{(00011001110100011111)} & \texttt{(01001100001011101111)} & \texttt{(101110)}\\\midrule
	\end{tabular}
	\end{adjustbox}
	\end{table}
	
	\addtocounter{table}{-1}
	\begin{table}
	\caption{(continued)}
	\centering
	\begin{adjustbox}{max width=\textwidth}
	\footnotesize
	\setlength{\tabcolsep}{4pt}
	\begin{tabular}{ccccc}\midrule
	$\mathcal{C}_{82,i}$ & $W_{82,j}$ & $\alpha$ & $\beta$ & $|\aut{\mathcal{C}_{82,i}}|$ \\\midrule
	1 & 2 & $-738$ & $18$ & $1$\\
	2 & 2 & $-736$ & $18$ & $1$\\
	3 & 2 & $-734$ & $18$ & $1$\\
	4 & 2 & $-714$ & $18$ & $1$\\
	5 & 2 & $-706$ & $18$ & $1$\\
	6 & 2 & $-688$ & $18$ & $1$\\
	7 & 2 & $-662$ & $18$ & $1$\\
	8 & 3 & $-828$ & $0$ & $1$\\
	9 & 3 & $-816$ & $0$ & $1$\\
	10 & 3 & $-812$ & $0$ & $1$\\
	11 & 3 & $-798$ & $0$ & $1$\\
	12 & 3 & $-786$ & $0$ & $1$\\
	13 & 3 & $-778$ & $0$ & $1$\\
	14 & 3 & $-776$ & $0$ & $1$\\
	15 & 3 & $-818$ & $1$ & $1$\\
	16 & 3 & $-838$ & $2$ & $1$\\
	17 & 3 & $-818$ & $2$ & $1$\\
	18 & 3 & $-854$ & $5$ & $1$\\\midrule
	\end{tabular}
	\end{adjustbox}
	\end{table}

\subsection{New Self-Dual Codes of Length 94}

The possible weight enumerators of a binary self-dual $[94,47,16]$ code are given in \cite{R-184} as
	\begin{align*}
	    W_{94,1}&=1+2\alpha x^{16}+(134044-2\alpha+128\beta)x^{18}\\&\quad+(2010660-30\alpha-896\beta)x^{20}+\cdots,\\
	    W_{94,2}&=1+2\alpha x^{16}+(134044-2\alpha+128\beta)x^{18}\\&\quad+(2018852-30\alpha-896\beta)x^{20}+\cdots,\\
	    W_{94,3}&=1+2\alpha x^{16}+(134044-2\alpha+128\beta)x^{18}\\&\quad+(2190884-30\alpha-896\beta)x^{20}+\cdots,
	\end{align*}
where $\alpha,\beta\in\ZZ$. Previously known $(\alpha,\beta)$ values for weight enumerator $W_{94,1}$ can be found online at \cite{WEPD} (see \cite{R-184,AMR4}).

We obtain 12 new best known binary self-dual codes of length 94 which have weight enumerator $W_{94,1}$ for
	\begin{enumerate}[label=]
        \item $\beta=-92$ and $\alpha\in\{46z:z=101\}$;
        \item $\beta=-46$ and $\alpha\in\{46z:z=75,\lb80,\lb82,\lb91\}$;
        \item $\beta=-23$ and $\alpha\in\{46z:z=64,\lb80\}$;
        \item $\beta=0$ and $\alpha\in\{46z:z=51,\lb55,\lb56,\lb76,\lb78\}$.
	\end{enumerate}	

The new codes are constructed by applying Theorem \ref{theorem-1} over $\FF_2$ (Table \ref{table-94}).

	\begin{table}
	\caption{New binary self-dual $[94,47,16]$ codes from Theorem \ref{theorem-1} over $\FF_2$, where $\vctg{\xi}=(\xi_1,\xi_2,\xi_3,\xi_4,\xi_5,\xi_6)$.}\label{table-94}
	\centering
	\begin{adjustbox}{max width=\textwidth}
	\footnotesize
	\setlength{\tabcolsep}{4pt}
	\begin{tabular}{ccccc}\midrule
	$\mathcal{C}_{94,i}$ & $\vct{a}$ & $\vct{b}$ & $\vct{c}$ & $\vctg{\xi}$\\\midrule
	1 & \texttt{(01111111111001110101110)} & \texttt{(01101101000111011010001)} & \texttt{(00001000000000000000000)} & \texttt{(001110)}\\
	2 & \texttt{(10010111111101010000010)} & \texttt{(11100100111001001111001)} & \texttt{(00001000000000000000000)} & \texttt{(001110)}\\
	3 & \texttt{(01100111001001011111010)} & \texttt{(10110101001111101000010)} & \texttt{(11010111010100010110011)} & \texttt{(001110)}\\
	4 & \texttt{(10010101100111000001101)} & \texttt{(11010000110110110000001)} & \texttt{(01010001111011001010111)} & \texttt{(110010)}\\
	5 & \texttt{(00000111101001000010100)} & \texttt{(11110100110110100111000)} & \texttt{(01001111001111101100100)} & \texttt{(001101)}\\
	6 & \texttt{(11011110010100111000000)} & \texttt{(01110100011001101101111)} & \texttt{(01110000001111000111111)} & \texttt{(001101)}\\
	7 & \texttt{(01011011110110010001110)} & \texttt{(10010110110110001100101)} & \texttt{(00000100000000000000000)} & \texttt{(110010)}\\
	8 & \texttt{(01100001100001100101010)} & \texttt{(11111101000110000010101)} & \texttt{(00100000000000000000000)} & \texttt{(001101)}\\
	9 & \texttt{(00000111001111011011110)} & \texttt{(11100000000100010011010)} & \texttt{(01101111110111000010001)} & \texttt{(110010)}\\
	10 & \texttt{(01101101011111000010001)} & \texttt{(10100110011101001101101)} & \texttt{(01011000110000010010101)} & \texttt{(110010)}\\
	11 & \texttt{(11010010011100001111011)} & \texttt{(10001110000000010001110)} & \texttt{(11101110011100011101000)} & \texttt{(110010)}\\
	12 & \texttt{(10101100011011001010111)} & \texttt{(00010010000011111000010)} & \texttt{(00111100000011101111110)} & \texttt{(001101)}\\\midrule	
	\end{tabular}
	\end{adjustbox}
	\end{table}
	
	\addtocounter{table}{-1}
	\begin{table}
	\caption{(continued)}
	\centering
	\begin{adjustbox}{max width=\textwidth}
	\footnotesize
	\setlength{\tabcolsep}{4pt}
	\begin{tabular}{ccccc}\midrule
	$\mathcal{C}_{94,i}$ & $W_{94,j}$ & $\alpha$ & $\beta$ & $|\aut{\mathcal{C}_{94,i}}|$ \\\midrule
	1 & 1 & $4646$ & $-92$ & $2\cdot 23$\\
	2 & 1 & $3450$ & $-46$ & $2\cdot 23$\\
	3 & 1 & $3680$ & $-46$ & $23$\\
	4 & 1 & $3772$ & $-46$ & $23$\\
	5 & 1 & $4186$ & $-46$ & $23$\\
	6 & 1 & $2944$ & $-23$ & $23$\\
	7 & 1 & $3680$ & $-23$ & $23$\\
	8 & 1 & $2346$ & $0$ & $2\cdot 23$\\
	9 & 1 & $2530$ & $0$ & $23$\\
	10 & 1 & $2576$ & $0$ & $23$\\
	11 & 1 & $3496$ & $0$ & $23$\\
	12 & 1 & $3588$ & $0$ & $23$\\\midrule
	\end{tabular}
	\end{adjustbox}
	\end{table}
	
\section{Conclusion}

In this work, we defined a modification of a previously given bordered matrix construction for self-dual codes which utilises $\lambda$-circulant matrices. We proved the necessary conditions required by the construction to produce self-dual codes over finite commutative Frobenius rings of characteristic 2. We demonstrated the ability of this technique by using it along with the well-known neighbour construction to produce the following new singly-even binary self-dual codes:
	
\begin{enumerate}
\item[]\textbf{Code of length 54:} We were able to construct a new singly-even binary self-dual $[54,27,10]$ code which has weight enumerator $W_{54,1}$ for:
	\begin{align*}
		\alpha=23.
	\end{align*}

\item[]\textbf{Codes of length 68:} We were able to construct new binary self-dual $[68,34,12]$ codes which have weight enumerator $W_{68,1}$ for:
	\begin{align*}
		\alpha\in\{110,\lb113,\lb114,\lb116,\lb118,\lb121,\lb123,\lb124\}
	\end{align*}
and weight enumerator $W_{68,2}$ for:
	\begin{align*}
		\beta=1 \ \text{and} \ \alpha\in\{20,\lb28,\lb32,\lb34,\lb36,\lb37\}.
	\end{align*}

\item[]\textbf{Codes of length 82:} We were able to construct new binary self-dual $[82,41,14]$ codes which have weight enumerator $W_{82,2}$ for:
	\begin{align*}
		\beta=18 \ \text{and} \ \alpha\in\{-2z:z=331,\lb344,\lb353,\lb357,\lb367,\lb368,\lb369\}
	\end{align*}
and weight enumerator $W_{82,3}$ for:
	\begin{align*}
		&\beta=0 \ \text{and} \ \alpha\in\{-2z:z=388,\lb389,\lb393,\lb399,\lb406,\lb408,\lb414\},\\
		&\beta=1 \ \text{and} \ \alpha\in\{-2z:z=409\},\\
		&\beta=2 \ \text{and} \ \alpha\in\{-2z:z=409,\lb419\},\\
		&\beta=5 \ \text{and} \ \alpha\in\{-2z:z=427\}.
	\end{align*}

\item[]\textbf{Codes of length 94:} We were able to construct new binary self-dual $[94,47,16]$ codes which have weight enumerator $W_{94,1}$ for:
	\begin{align*}
		&\beta=-92 \ \text{and} \ \alpha\in\{46z:z=101\},\\
		&\beta=-46 \ \text{and} \ \alpha\in\{46z:z=75,\lb80,\lb82,\lb91\},\\
		&\beta=-23 \ \text{and} \ \alpha\in\{46z:z=64,\lb80\},\\
		&\beta=0 \ \text{and} \ \alpha\in\{46z:z=51,\lb55,\lb56,\lb76,\lb78\}.
	\end{align*}
\end{enumerate}

Due to the size of the search field for the given construction, all of the codes were obtained by random searches. As such, if a more comprehensive generation procedure was implemented, there would likely be more new codes which could be constructed with this technique. A suggestion for future work could to be investigate further modification or generalisation of our construction. We could also consider our construction after substituting the matrix $X$ with some other orthogonal matrix, for example, an orthogonal matrix arising from group rings. Similarly, we could replace the matrix $C$ with another orthogonal matrix. This could possibly lead to the discovery of new binary self-dual codes with more atypically structured automorphism groups.

\bibliographystyle{plainnat}
\bibliography{paper5}

\begin{thebibliography}{68}
\providecommand{\natexlab}[1]{#1}
\providecommand{\url}[1]{\texttt{#1}}
\expandafter\ifx\csname urlstyle\endcsname\relax
  \providecommand{\doi}[1]{doi: #1}\else
  \providecommand{\doi}{doi: \begingroup \urlstyle{rm}\Url}\fi

\bibitem[Betsumiya et~al.(2003)Betsumiya, Georgiou, Gulliver, Harada, and
  Koukouvinos]{R-014}
K.~Betsumiya, S.~Georgiou, T.~A. Gulliver, M.~Harada, and C.~Koukouvinos.
\newblock {On self-dual codes over some prime fields}.
\newblock \emph{{Discrete Math.}}, 262\penalty0 (1--3):\penalty0 37--58, 2003.
\newblock \doi{10.1016/S0012-365X(02)00520-4}.

\bibitem[Bortos et~al.(2020)Bortos, Gildea, Kaya, Korban, and
  Tylyshchak]{R-100}
M.~Bortos, J.~Gildea, A.~Kaya, A.~Korban, and A.~Tylyshchak.
\newblock {New self-dual codes of length 68 from a $2\times2$ block matrix
  construction and group rings}.
\newblock \emph{{Adv. Math. Commun.}}, 2020.
\newblock \doi{10.3934/amc.2020111}.

\bibitem[Bosma et~al.(1997)Bosma, Cannon, and Playoust]{Magma}
W.~Bosma, J.~Cannon, and C.~Playoust.
\newblock {The Magma Algebra System I: The User Language}.
\newblock \emph{{J. Symbolic Comput.}}, 24\penalty0 (3--4):\penalty0 235--265,
  1997.
\newblock \doi{10.1006/jsco.1996.0125}.

\bibitem[Boukliev and Buyuklieva(1998)]{R-201}
I.~Boukliev and S.~Buyuklieva.
\newblock {Some New Extremal Self-Dual Codes with Lengths 44, 50, 54, and 58}.
\newblock \emph{{IEEE Trans. Inform. Theory}}, 44\penalty0 (2):\penalty0
  809--812, 1998.
\newblock \doi{10.1109/18.661526}.

\bibitem[Bouyukliev(2007)]{Q-extension}
I.~G. Bouyukliev.
\newblock {What is Q-extension?}
\newblock \emph{{Serdica J. Comput.}}, 1\penalty0 (2):\penalty0 115--130, 2007.

\bibitem[Bouyuklieva and Östergård(2006)]{R-114}
S.~Bouyuklieva and P.~R.~J. Östergård.
\newblock {New constructions of optimal self-dual binary codes of length 54}.
\newblock \emph{{Des. Codes Cryptogr.}}, 41\penalty0 (1):\penalty0 101--109,
  2006.
\newblock \doi{10.1007/s10623-006-0018-2}.

\bibitem[Bouyuklieva et~al.(2005)Bouyuklieva, Russeva, and Yankov]{R-202}
S.~Bouyuklieva, R.~Russeva, and N.~Yankov.
\newblock {On the Structure of Binary Self-Dual Codes Having an Automorphism of
  Order a Square of an Odd Prime}.
\newblock \emph{{IEEE Trans. Inform. Theory}}, 51\penalty0 (10):\penalty0
  3678--3686, 2005.
\newblock \doi{10.1109/TIT.2005.855616}.

\bibitem[Buyuklieva and Boukliev(1998)]{R-132}
S.~Buyuklieva and I.~Boukliev.
\newblock {Extremal Self-Dual Codes with an Automorphism of Order 2}.
\newblock \emph{{IEEE Trans. Inform. Theory}}, 44\penalty0 (1):\penalty0
  323--328, 1998.
\newblock \doi{10.1109/18.651059}.

\bibitem[Conway and Sloane(1990)]{R-029}
J.~H. Conway and N.~J.~A. Sloane.
\newblock {A New Upper Bound on the Minimal Distance of Self-Dual Codes}.
\newblock \emph{{IEEE Trans. Inform. Theory}}, 36\penalty0 (6):\penalty0
  1319--1333, 1990.
\newblock \doi{10.1109/18.59931}.

\bibitem[Dougherty(2017)]{B-014}
S.~T. Dougherty.
\newblock \emph{{Algebraic Coding Theory Over Finite Commutative Rings}}.
\newblock {Springer}, Cham, Switzerland, 1st edition, 2017.

\bibitem[Dougherty and Harada(1999)]{R-198}
S.~T. Dougherty and M.~Harada.
\newblock {New Extremal Self-Dual Codes of Length 68}.
\newblock \emph{{IEEE Trans. Inform. Theory}}, 45\penalty0 (6):\penalty0
  2133--2136, 1999.
\newblock \doi{10.1109/18.782158}.

\bibitem[Dougherty et~al.(1997)Dougherty, Gulliver, and Harada]{R-044}
S.~T. Dougherty, T.~A. Gulliver, and M.~Harada.
\newblock {Extremal Binary Self-Dual Codes}.
\newblock \emph{{IEEE Trans. Inform. Theory}}, 43\penalty0 (6):\penalty0
  2036--2047, 1997.
\newblock \doi{10.1109/18.641574}.

\bibitem[Dougherty et~al.(1999)Dougherty, Gaborit, Harada, and Solé]{R-117}
S.~T. Dougherty, P.~Gaborit, M.~Harada, and P.~Solé.
\newblock {Type II Codes over $\mathbb{F}_2+u\mathbb{F}_2$}.
\newblock \emph{{IEEE Trans. Inform. Theory}}, 45\penalty0 (1):\penalty0
  32--45, 1999.
\newblock \doi{10.1109/18.746770}.

\bibitem[Dougherty et~al.(2019)Dougherty, Gildea, Korban, Kaya, Tylyshchak, and
  Yildiz]{R-043}
S.~T. Dougherty, J.~Gildea, A.~Korban, A.~Kaya, A.~Tylyshchak, and B.~Yildiz.
\newblock {Bordered constructions of self-dual codes from group rings and new
  extremal binary self-dual codes}.
\newblock \emph{{Finite Fields Appl.}}, 57:\penalty0 108--127, 2019.
\newblock \doi{10.1016/j.ffa.2019.02.004}.

\bibitem[Dougherty et~al.(2020{\natexlab{a}})Dougherty, Gildea, and
  Kaya]{R-108}
S.~T. Dougherty, J.~Gildea, and A.~Kaya.
\newblock {Quadruple bordered constructions of self-dual codes from group
  rings}.
\newblock \emph{{Cryptogr. Commun.}}, 12\penalty0 (1):\penalty0 127--146,
  2020{\natexlab{a}}.
\newblock \doi{10.1007/s12095-019-00380-8}.

\bibitem[Dougherty et~al.(2020{\natexlab{b}})Dougherty, Gildea, and
  Kaya]{R-111}
S.~T. Dougherty, J.~Gildea, and A.~Kaya.
\newblock {$2^n$ Bordered constructions of self-dual codes from group rings}.
\newblock \emph{{Finite Fields Appl.}}, 67, 2020{\natexlab{b}}.
\newblock \doi{10.1016/j.ffa.2020.101692}.

\bibitem[Dougherty et~al.(2020{\natexlab{c}})Dougherty, Gildea, Kaya, and
  Yildiz]{R-105}
S.~T. Dougherty, J.~Gildea, A.~Kaya, and B.~Yildiz.
\newblock {New self-dual and formally self-dual codes from group ring
  constructions}.
\newblock \emph{{Adv. Math. Commun.}}, 14\penalty0 (1):\penalty0 11--22,
  2020{\natexlab{c}}.
\newblock \doi{10.3934/amc.2020002}.

\bibitem[Dougherty et~al.(2020{\natexlab{d}})Dougherty, Gildea, Korban, and
  Kaya]{R-107}
S.~T. Dougherty, J.~Gildea, A.~Korban, and A.~Kaya.
\newblock {Composite constructions of self-dual codes from group rings and new
  extremal self-dual binary codes of length 68}.
\newblock \emph{{Adv. Math. Commun.}}, 14\penalty0 (4):\penalty0 677--702,
  2020{\natexlab{d}}.
\newblock \doi{10.3934/amc.2020037}.

\bibitem[Dougherty et~al.(2020{\natexlab{e}})Dougherty, Gildea, Korban, and
  Kaya]{R-151}
S.~T. Dougherty, J.~Gildea, A.~Korban, and A.~Kaya.
\newblock {New extremal self-dual binary codes of length 68 via composite
  construction, $\mathbb{F}_2+u\mathbb{F}_2$ lifts, extensions and neighbours}.
\newblock \emph{{Int. J. Inf. Coding Theory}}, 5\penalty0 (3--4):\penalty0
  211--226, 2020{\natexlab{e}}.
\newblock \doi{10.1504/IJICOT.2020.110703}.

\bibitem[Dougherty et~al.(2021)Dougherty, Gildea, Korban, and Kaya]{R-101}
S.~T. Dougherty, J.~Gildea, A.~Korban, and A.~Kaya.
\newblock {Composite matrices from group rings, composite $G$-codes and
  constructions of self-dual codes}.
\newblock \emph{{Des. Codes Cryptogr.}}, 89\penalty0 (7):\penalty0 1615--1638,
  2021.
\newblock \doi{10.1007/s10623-021-00882-8}.

\bibitem[Gildea et~al.(2018)Gildea, Kaya, Taylor, and Yildiz]{R-008}
J.~Gildea, A.~Kaya, R.~Taylor, and B.~Yildiz.
\newblock {Constructions for self-dual codes induced from group rings}.
\newblock \emph{{Finite Fields Appl.}}, 51:\penalty0 71--92, 2018.
\newblock \doi{10.1016/j.ffa.2018.01.002}.

\bibitem[Gildea et~al.(2019{\natexlab{a}})Gildea, Kaya, Tylyshchak, and
  Yildiz]{R-080}
J.~Gildea, A.~Kaya, A.~Tylyshchak, and B.~Yildiz.
\newblock {A Group Induced Four-circulant Construction for Self-dual Codes and
  New Extremal Binary Self-dual Codes}, 2019{\natexlab{a}}.
\newblock \url{https://arxiv.org/abs/1912.11758}.

\bibitem[Gildea et~al.(2019{\natexlab{b}})Gildea, Kaya, and Yildiz]{R-106}
J.~Gildea, A.~Kaya, and B.~Yildiz.
\newblock {An altered four circulant construction for self-dual codes from
  group rings and new extremal binary self-dual codes I}.
\newblock \emph{{Discrete Math.}}, 342\penalty0 (12), 2019{\natexlab{b}}.
\newblock \doi{10.1016/j.disc.2019.111620}.

\bibitem[Gildea et~al.(2020{\natexlab{a}})Gildea, Hamilton, Kaya, and
  Yildiz]{R-104}
J.~Gildea, H.~Hamilton, A.~Kaya, and B.~Yildiz.
\newblock {Modified quadratic residue constructions and new extremal binary
  self-dual codes of lengths 64, 66 and 68}.
\newblock \emph{{Inform. Process. Lett.}}, 157, 2020{\natexlab{a}}.
\newblock \doi{10.1016/j.ipl.2020.105927}.

\bibitem[Gildea et~al.(2020{\natexlab{b}})Gildea, Kaya, Korban, and
  Tylyshchak]{R-113}
J.~Gildea, A.~Kaya, A.~Korban, and A.~Tylyshchak.
\newblock {Self-dual codes using bisymmetric matrices and group rings}.
\newblock \emph{{Discrete Math.}}, 343\penalty0 (11), 2020{\natexlab{b}}.
\newblock \doi{10.1016/j.disc.2020.112085}.

\bibitem[Gildea et~al.(2020{\natexlab{c}})Gildea, Kaya, Korban, and
  Yildiz]{R-110}
J.~Gildea, A.~Kaya, A.~Korban, and B.~Yildiz.
\newblock {New extremal binary self-dual codes of length 68 from generalized
  neighbors}.
\newblock \emph{{Finite Fields Appl.}}, 67, 2020{\natexlab{c}}.
\newblock \doi{10.1016/j.ffa.2020.101727}.

\bibitem[Gildea et~al.(2020{\natexlab{d}})Gildea, Kaya, Taylor, and
  Tylyshchak]{R-103}
J.~Gildea, A.~Kaya, R.~Taylor, and A.~Tylyshchak.
\newblock {New Self-Dual Codes from $2\times 2$ Block Circulant Matrices, Group
  Rings and Neighbours of Neighbours}, 2020{\natexlab{d}}.
\newblock \url{https://arxiv.org/abs/2002.09789v1}.

\bibitem[Gildea et~al.(2020{\natexlab{e}})Gildea, Kaya, Tylyshchak, and
  Yildiz]{R-098}
J.~Gildea, A.~Kaya, A.~Tylyshchak, and B.~Yildiz.
\newblock {A modified bordered construction for self-dual codes from group
  rings}.
\newblock \emph{{J. Algebra Comb. Discrete Struct. Appl.}}, 7\penalty0
  (2):\penalty0 103--119, 2020{\natexlab{e}}.
\newblock \doi{10.13069/jacodesmath.729402}.

\bibitem[Gildea et~al.(2020{\natexlab{f}})Gildea, Kaya, and Yildiz]{R-097}
J.~Gildea, A.~Kaya, and B.~Yildiz.
\newblock {New binary self-dual codes via a variation of the four-circulant
  construction}.
\newblock \emph{{Math. Commun.}}, 25\penalty0 (2):\penalty0 213--226,
  2020{\natexlab{f}}.

\bibitem[Gildea et~al.(2020{\natexlab{g}})Gildea, Taylor, Kaya, and
  Tylyshchak]{R-067}
J.~Gildea, R.~Taylor, A.~Kaya, and A.~Tylyshchak.
\newblock {Double bordered constructions of self-dual codes from group rings
  over Frobenius rings}.
\newblock \emph{{Cryptogr. Commun.}}, 12\penalty0 (4):\penalty0 769--784,
  2020{\natexlab{g}}.
\newblock \doi{10.1007/s12095-019-00420-3}.

\bibitem[Gildea et~al.(2021{\natexlab{a}})Gildea, Kaya, Taylor, Tylyshchak, and
  Yildiz]{R-099}
J.~Gildea, A.~Kaya, R.~Taylor, A.~Tylyshchak, and B.~Yildiz.
\newblock {New extremal binary self-dual codes from block circulant matrices
  and block quadratic residue circulant matrices}.
\newblock \emph{{Discrete Math.}}, 344\penalty0 (11), 2021{\natexlab{a}}.
\newblock \doi{10.1016/j.disc.2021.112590}.

\bibitem[Gildea et~al.(2021{\natexlab{b}})Gildea, Korban, Kaya, and
  Yildiz]{R-096}
J.~Gildea, A.~Korban, A.~Kaya, and B.~Yildiz.
\newblock {Constructing self-dual codes from group rings and reverse circulant
  matrices}.
\newblock \emph{{Adv. Math. Commun.}}, 15\penalty0 (3):\penalty0 471--485,
  2021{\natexlab{b}}.
\newblock \doi{10.3934/amc.2020077}.

\bibitem[Gildea et~al.(2021{\natexlab{c}})Gildea, Korban, and Roberts]{AMR1}
J.~Gildea, A.~Korban, and A.~M. Roberts.
\newblock {New binary self-dual codes of lengths 56, 58, 64, 80 and 92 from a
  modification of the four circulant construction}.
\newblock \emph{{Finite Fields Appl.}}, 75, 2021{\natexlab{c}}.
\newblock \doi{10.1016/j.ffa.2021.101876}.

\bibitem[Gildea et~al.(2021{\natexlab{d}})Gildea, Korban, Roberts, and
  Tylyshchak]{AMR4}
J.~Gildea, A.~Korban, A.~M. Roberts, and A.~Tylyshchak.
\newblock {New binary self-dual codes of lengths 56, 62, 78, 92 and 94 from a
  bordered construction}, 2021{\natexlab{d}}.
\newblock \url{https://arxiv.org/abs/2108.09184}.

\bibitem[Gildea et~al.(2021{\natexlab{e}})Gildea, Korban, Roberts, and
  Tylyshchak]{GMD5}
J.~Gildea, A.~Korban, A.~M. Roberts, and A.~Tylyshchak.
\newblock Generator matrix database, 2021{\natexlab{e}}.
\newblock \url{https://amr5-g2w6r1kdi8.netlify.app}.

\bibitem[Gürel and Yankov(2016)]{R-071}
M.~Gürel and N.~Yankov.
\newblock {Self-dual codes with an automorphism of order 17}.
\newblock \emph{{Math. Commun.}}, 21\penalty0 (1):\penalty0 97--107, 2016.

\bibitem[Gulliver and Harada(1998)]{R-057}
T.~A. Gulliver and M.~Harada.
\newblock {Classification of Extremal Double Circulant Self-Dual Codes of
  Lengths 64 to 72}.
\newblock \emph{{Des. Codes Cryptogr.}}, 13\penalty0 (3):\penalty0 257--269,
  1998.
\newblock \doi{10.1023/A:1008249924142}.

\bibitem[Gulliver and Harada(2006)]{R-058}
T.~A. Gulliver and M.~Harada.
\newblock {Classification of extremal double circulant self-dual codes of
  lengths 74--88}.
\newblock \emph{{Discrete Math.}}, 306\penalty0 (17):\penalty0 2064--2072,
  2006.
\newblock \doi{10.1016/j.disc.2006.05.004}.

\bibitem[Gulliver and Harada(2019)]{R-059}
T.~A. Gulliver and M.~Harada.
\newblock {On extremal double circulant self-dual codes of lengths 90--96}.
\newblock \emph{{Appl. Algebra Engrg. Comm. Comput.}}, 30\penalty0
  (5):\penalty0 403--415, 2019.
\newblock \doi{10.1007/s00200-019-00381-3}.

\bibitem[Harada(2019)]{R-138}
M.~Harada.
\newblock {Singly even self-dual codes of length $24k+10$ and minimum weight
  $4k+2$}.
\newblock \emph{{Cryptogr. Commun.}}, 11\penalty0 (4):\penalty0 597--608, 2019.
\newblock \doi{10.1007/s12095-018-0303-8}.

\bibitem[Harada and Kimura(1995)]{R-203}
M.~Harada and H.~Kimura.
\newblock {On Extremal Self-Dual Codes}.
\newblock \emph{{Math. J. Okayama Univ.}}, 37\penalty0 (1):\penalty0 1--14,
  1995.

\bibitem[Harada and Nishimura(2007)]{R-136}
M.~Harada and T.~Nishimura.
\newblock {An extremal singly even self-dual code of length 88}.
\newblock \emph{{Adv. Math. Commun.}}, 1\penalty0 (2):\penalty0 261--267, 2007.
\newblock \doi{10.3934/amc.2007.1.261}.

\bibitem[Harada et~al.(1998)Harada, Gulliver, and Kaneta]{R-056}
M.~Harada, T.~A. Gulliver, and H.~Kaneta.
\newblock {Classification of extremal double-circulant self-dual codes of
  length up to 62}.
\newblock \emph{{Discrete Math.}}, 188\penalty0 (1--3):\penalty0 127--136,
  1998.
\newblock \doi{10.1016/S0012-365X(97)00250-1}.

\bibitem[Harada et~al.(2010)Harada, Kiermaier, Wassermann, and Yorgova]{R-184}
M.~Harada, M.~Kiermaier, A.~Wassermann, and R.~Yorgova.
\newblock {New Binary Singly Even Self-Dual Codes}.
\newblock \emph{{IEEE Trans. Inform. Theory}}, 56\penalty0 (4):\penalty0
  1612--1617, 2010.
\newblock \doi{10.1109/TIT.2010.2040967}.

\bibitem[Karadeniz and Yildiz(2013)]{R-197}
S.~Karadeniz and B.~Yildiz.
\newblock {New extremal binary self-dual codes of length 68 from $R_2$-lifts of
  binary self-dual codes}.
\newblock \emph{{Adv. Math. Commun.}}, 7\penalty0 (2):\penalty0 219--229, 2013.
\newblock \doi{10.3934/amc.2013.7.219}.

\bibitem[Karadeniz et~al.(2014)Karadeniz, Yildiz, and Aydin]{R-123}
S.~Karadeniz, B.~Yildiz, and N.~Aydin.
\newblock {Extremal Binary Self-Dual Codes of Lengths 64 and 66 from
  Four-Circulant Constructions over $\mathbb{F}_2+u\mathbb{F}_2$}.
\newblock \emph{{Filomat}}, 28\penalty0 (5):\penalty0 937--945, 2014.
\newblock \doi{10.2298/FIL1405937K}.

\bibitem[Kaya(2017)]{R-042}
A.~Kaya.
\newblock {New extremal binary self-dual codes of length 68 via the short
  Kharaghani array over $\mathbb{F}_2+u\mathbb{F}_2$}.
\newblock \emph{{Math. Commun.}}, 22\penalty0 (1):\penalty0 121--131, 2017.

\bibitem[Kaya and Tüfekçi(2017)]{R-200}
A.~Kaya and N.~Tüfekçi.
\newblock {New extremal binary self-dual codes of lengths 66 and 68 from codes
  over $R_{k,m}$}.
\newblock \emph{{Bull. Korean Math. Soc.}}, 54\penalty0 (1):\penalty0 29--42,
  2017.
\newblock \doi{10.4134/BKMS.b150213}.

\bibitem[Kaya and Yildiz(2014)]{R-127}
A.~Kaya and B.~Yildiz.
\newblock {New extremal binary self-dual codes of length 68}.
\newblock \emph{{J. Algebra Comb. Discrete Struct. Appl.}}, 1\penalty0
  (1):\penalty0 29--39, 2014.
\newblock \doi{10.13069/jacodesmath.79879}.

\bibitem[Kaya and Yildiz(2016)]{R-126}
A.~Kaya and B.~Yildiz.
\newblock {Various constructions for self-dual codes over rings and new binary
  self-dual codes}.
\newblock \emph{{Discrete Math.}}, 339\penalty0 (2):\penalty0 460--469, 2016.
\newblock \doi{10.1016/j.disc.2015.09.010}.

\bibitem[Kaya and Yildiz(2019)]{R-025}
A.~Kaya and B.~Yildiz.
\newblock {New extremal binary self-dual codes from a Baumert-Hall array}.
\newblock \emph{{Discrete Appl. Math.}}, 271:\penalty0 74--83, 2019.
\newblock \doi{10.1016/j.dam.2019.08.003}.

\bibitem[Kaya et~al.(2014)Kaya, Yildiz, and Siap]{R-048}
A.~Kaya, B.~Yildiz, and I.~Siap.
\newblock {New extremal binary self-dual codes of length 68 from quadratic
  residue codes over $\mathbb{F}_2+u\mathbb{F}_2+u^2\mathbb{F}_2$}.
\newblock \emph{{Finite Fields Appl.}}, 29:\penalty0 160--177, 2014.
\newblock \doi{10.1016/j.ffa.2014.04.009}.

\bibitem[Kaya et~al.(2015)Kaya, Yildiz, and Siap]{R-122}
A.~Kaya, B.~Yildiz, and I.~Siap.
\newblock {New extremal binary self-dual codes from
  $\mathbb{F}_4+u\mathbb{F}_4$-lifts of quadratic circulant codes over
  $\mathbb{F}_4$}.
\newblock \emph{{Finite Fields Appl.}}, 35:\penalty0 318--329, 2015.
\newblock \doi{10.1016/j.ffa.2015.05.004}.

\bibitem[Kaya et~al.(2016)Kaya, Yildiz, and Pasa]{R-176}
A.~Kaya, B.~Yildiz, and A.~Pasa.
\newblock {New extremal binary self-dual codes from a modified four circulant
  construction}.
\newblock \emph{{Discrete Math.}}, 339\penalty0 (3):\penalty0 1086--1094, 2016.
\newblock \doi{10.1016/j.disc.2015.10.041}.

\bibitem[Mallows and Sloane(1973)]{R-116}
C.~L. Mallows and N.~J.~A. Sloane.
\newblock {An Upper Bound for Self-Dual Codes}.
\newblock \emph{{Inf. Control}}, 22\penalty0 (2):\penalty0 188--200, 1973.
\newblock \doi{10.1016/S0019-9958(73)90273-8}.

\bibitem[Çomak et~al.(2018)Çomak, Kim, and Özbudak]{R-183}
P.~Çomak, J.-L. Kim, and F.~Özbudak.
\newblock {New cubic self-dual codes of length 54, 60 and 66}.
\newblock \emph{{Appl. Algebra Engrg. Comm. Comput.}}, 29\penalty0
  (4):\penalty0 303--312, 2018.
\newblock \doi{10.1007/s00200-017-0343-x}.

\bibitem[Rains(1998)]{R-115}
E.~M. Rains.
\newblock {Shadow Bounds for Self-Dual Codes}.
\newblock \emph{{IEEE Trans. Inform. Theory}}, 44\penalty0 (1):\penalty0
  134--139, 1998.
\newblock \doi{10.1109/18.651000}.

\bibitem[Roberts(2021)]{WEPD}
A.~M. Roberts.
\newblock {Weight enumerator parameter database for binary self-dual codes},
  2021.
\newblock \url{https://amr-wepd-bsdc.netlify.app}.

\bibitem[Tonchev and Yorgov(1996)]{R-038}
V.~Tonchev and V.~Y. Yorgov.
\newblock {The Existence of Certain Extremal $[54,27,10]$ Self-Dual Codes}.
\newblock \emph{{IEEE Trans. Inform. Theory}}, 42\penalty0 (5):\penalty0
  1628--1631, 1996.
\newblock \doi{10.1109/18.532913}.

\bibitem[Tsai(1999)]{R-199}
H.-P. Tsai.
\newblock {Extremal Self-Dual Codes of Lengths 66 and 68}.
\newblock \emph{{IEEE Trans. Inform. Theory}}, 45\penalty0 (6):\penalty0
  2129--2133, 1999.
\newblock \doi{10.1109/18.782156}.

\bibitem[Tsai et~al.(2008)Tsai, Shih, Wu, Su, and Chen]{R-196}
H.-P. Tsai, P.-Y. Shih, R.-Y. Wu, W.-K. Su, and C.-H. Chen.
\newblock {Construction of Self-Dual Codes}.
\newblock \emph{{IEEE Trans. Inform. Theory}}, 54\penalty0 (8):\penalty0
  3826--3831, 2008.
\newblock \doi{10.1109/TIT.2008.926454}.

\bibitem[Yankov(2014)]{R-193}
N.~Yankov.
\newblock {Self-dual $[62,31,12]$ and $[64,32,12]$ codes with an automorphism
  of order 7}.
\newblock \emph{{Adv. Math. Commun.}}, 8\penalty0 (1):\penalty0 73--81, 2014.
\newblock \doi{10.3934/amc.2014.8.73}.

\bibitem[Yankov and Anev(2021)]{R-125}
N.~Yankov and D.~Anev.
\newblock {On the self-dual codes with an automorphism of order 5}.
\newblock \emph{{Appl. Algebra Engrg. Comm. Comput.}}, 32\penalty0
  (2):\penalty0 97--111, 2021.
\newblock \doi{10.1007/s00200-019-00403-0}.

\bibitem[Yankov and Lee(2014)]{R-194}
N.~Yankov and M.~H. Lee.
\newblock {New binary self-dual codes of lengths 50--60}.
\newblock \emph{{Des. Codes Cryptogr.}}, 73\penalty0 (3):\penalty0 983--996,
  2014.
\newblock \doi{10.1007/s10623-013-9839-y}.

\bibitem[Yankov and Russeva(2011)]{R-195}
N.~Yankov and R.~Russeva.
\newblock {Binary Self-Dual Codes of Lengths 52 to 60 With an Automorphism of
  Order 7 or 13}.
\newblock \emph{{IEEE Trans. Inform. Theory}}, 57\penalty0 (11):\penalty0
  7498--7506, 2011.
\newblock \doi{10.1109/TIT.2011.2155619}.

\bibitem[Yankov et~al.(2015)Yankov, Lee, Gürel, and Ivanova]{R-072}
N.~Yankov, M.~H. Lee, M.~Gürel, and M.~Ivanova.
\newblock {Self-Dual Codes With an Automorphism of Order 11}.
\newblock \emph{{IEEE Trans. Inform. Theory}}, 61\penalty0 (3):\penalty0
  1188--1193, 2015.
\newblock \doi{10.1109/TIT.2015.2396915}.

\bibitem[Yankov et~al.(2017)Yankov, Anev, and Gürel]{R-085}
N.~Yankov, D.~Anev, and M.~Gürel.
\newblock {Self-dual codes with an automorphism of order 13}.
\newblock \emph{{Adv. Math. Commun.}}, 11\penalty0 (3):\penalty0 635--645,
  2017.
\newblock \doi{10.3934/amc.2017047}.

\bibitem[Yankov et~al.(2018)Yankov, Ivanova, and Lee]{R-046}
N.~Yankov, M.~Ivanova, and M.~H. Lee.
\newblock {Self-dual codes with an automorphism of order 7 and $s$-extremal
  codes of length 68}.
\newblock \emph{{Finite Fields Appl.}}, 51:\penalty0 17--30, 2018.
\newblock \doi{10.1016/j.ffa.2017.12.001}.

\end{thebibliography}
\end{document}